\theoremstyle{plain}
\newtheorem{theorem}{Theorem}
\newtheorem{lemma}[theorem]{Lemma}
\theoremstyle{definition}
\begin{document}

\title{Pinning of Fermionic Occupation Numbers}

\author{Christian Schilling}
\affiliation{Institute for Theoretical Physics, ETH Z\"urich, Wolfgang--Pauli--Strasse 27, CH-8093 Z\"urich, Switzerland}

\author{David Gross}
\affiliation{Institute for Physics, University of Freiburg, Rheinstrasse 10, D-79104 Freiburg, Germany}

\author{Matthias Christandl}
\affiliation{Institute for Theoretical Physics, ETH Z\"urich, Wolfgang--Pauli--Strasse 27, CH-8093 Z\"urich, Switzerland}

\date{\today}

\begin{abstract}
	The Pauli exclusion principle is a constraint on the natural
	occupation numbers of fermionic states. It has been suspected
	since at least the 1970's, and only proved very recently, that there
	is a multitude of further constraints on these numbers, generalizing
    the Pauli principle.  Here, we provide the first
	analytic analysis of the physical relevance of these constraints. We
	compute the natural occupation numbers for the ground states of a
	family of interacting fermions in a harmonic potential.
	Intriguingly, we find that the occupation numbers are almost, but
	not exactly, pinned to the boundary of the allowed region
	(\emph{quasi-pinned}).  The result suggests that the physics behind
	the phenomenon is richer than previously appreciated.  In
	particular, it shows that for some models, the generalized Pauli
	constraints play a role for the ground state, even though they do
	not limit the ground-state energy.  Our findings suggest a
	generalization of the Hartree-Fock approximation.
\end{abstract}

\pacs{03.67.-a, 05.30.Fk, 31.15.-p}

\maketitle

\paragraph*{\label{sec:intro}Introduction.---}
In 1925 the study of atomic transitions led to Pauli's exclusion
principle \cite{Pauli1925}. It states that for identical fermions the
occupation number for any quantum state cannot exceed the value 1.  By
1926, Dirac \cite{Dirac1926} and Heisenberg \cite{Heis1926} had
identified the exclusion principle as a consequence of a much deeper
statement: the anti-symmetry of the many-fermion wave function.  While
anti-symmetry allows one to find the correct solutions to the full
many-fermion Schr\"odinger equation, it does not render the exclusion
principle obsolete: In many situations, the latter is sufficient to
predict the qualitative behavior of fermionic systems without the need
to resort to (often computationally intractable) \emph{ab initio}
methods.  The \emph{Aufbau} principle for elements in the periodic
table serves as a prime example.

This observation motivates the study of generalizations of the
exclusion principle, which, maybe surprisingly, exist and exhibit an
extremely rich structure \cite{Kly3}. To set the scene,
note that the Pauli constraint can be stated succinctly  as
\begin{equation}\label{PauliConstraint}
	0\leq \lambda_i \leq1\qquad
	\forall i
\end{equation}
in terms of the \emph{natural occupation numbers} $\lambda_i$, which
are the eigenvalues of the $1$-particle reduced density operator
($1-$RDO) $\rho_1$, normalized to the particle number $N$.
The utility of the exclusion principle is grounded in the fact that in
the ground states of many-fermion systems, one often observes $\lambda_i\approx 0$
or $\lambda_i \approx 1$, which is equivalent to stating that the
Hartree-Fock approximation works fairly well in these systems.

It had been observed in the 1970s that there are further linear
inequalities respected by the natural occupation numbers as a result of
global anti-symmetry \cite{Rus1,Rus2,Borl1972}.
One particular example is the so-called
Borland-Dennis setting $\wedge^3[\mathcal{H}_6]$ of three fermions and
a six dimensional $1-$particle Hilbert space $\mathcal{H}_6$
\cite{Borl1972}. Here, the set of constraints is given by
\begin{eqnarray}
	&&\lambda_1+\lambda_6 = \lambda_2+\lambda_5 = \lambda_3+\lambda_4 = 1 \label{d=6a} \qquad\\
	&&D^{(6)} := \lambda_5+\lambda_6-\lambda_4 \geq 0 \label{d=6b}
\end{eqnarray}
on the ordered eigenvalues $\lambda_i \geq \lambda_{i+1}$.

In a ground-breaking work building on recent progress in invariant
theory and representation theory, Klyachko exhibited an algorithm for
computing \emph{all} such Pauli-like constraints \cite{Kly3, Kly2}.
In fact, his work is part of a more general effort in quantum information
theory addressing the \emph{quantum marginal problem} which asks when a given set
of single-site reduced density operators (marginals) is compatible in the sense that they arise from
a common pure global state (see also \cite{Kly4,MC,Daftuar, MC2}). The global
state may be subject to certain symmetry constraints---one obtains the
fermionic case (commonly known as the $N-$representability problem \cite{Col, Col2}) by requiring total anti-symmetry under particle
exchange. Klyachko showed that for fixed particle number $N$
and dimension $d$ of the $1-$particle Hilbert space, the generalized
Pauli constraints amount to affine inequalities of the form
\begin{equation}\label{margconstraint}
\kappa_0 +\kappa_1 \lambda_1+\ldots+ \kappa_d \lambda_d \geq
0.
\end{equation}
Geometrically, these constraints define a \emph{convex polytope}
$\mathcal{P}_{N,d} \subset \mathbb{R}^d$ of possible spectra (for more
details see Appendix \ref{sec:notation}). In general, if a spectral inequality such
as (\ref{PauliConstraint}) or (\ref{margconstraint}) is (approximately)
saturated we say that the corresponding spectrum is (quasi-) pinned to
its extremum.

The natural question arises as to whether ground states of relevant
many-body models saturate some of those inequalities. Strong numerical
evidence supporting this conjecture has been presented in \cite{Kly1}.
The problem is challenging to address analytically, as one has to not
only compute the ground state, but also determine and diagonalize the
corresponding $1-$RDO.

Here, we present for the first time an analytic analysis. For the ground state
of a model of interacting fermions in a harmonic potential, the natural occupation
numbers are calculated. We obtain several results. We confirm that for this very natural
model, the natural occupation numbers lie indeed close to the boundary of
set of allowed ones. The analytic analysis enables us to track the
``trajectory'' of eigenvalues as a function of the interaction strength
between the fermions. What is conceptually also important, is
the fact that the eigenvalues never lie \emph{exactly on} the boundary. To
see why one could expect the opposite, note that the ground state
energy of a  Hamiltonian $H=\sum_{i,j} h^{(i,j)}$ with two-particle
terms $h^{(i,j)}$ can be represented as a constrained optimization
problem
\begin{equation*}
	E_{\min} =\min_{\rho_2^{(i,j)}}  \sum_{i,j=1}^N \operatorname{tr}[ h^{(i,j)} \rho_2^{(i,j)}]
\end{equation*}
where the $\rho_2^{(i,j)}$ are 2-particle density operators that are
\emph{compatible} in the sense that
they are the reduced densities of some $N-$fermion state \cite{Col, Col2}. Since the energy
functional is linear, it does not possess an unconstrained minimum. Therefore,
$E_{\min}$ will be achieved on the boundary of the set of compatible density
operators, where at least one of the compatibility constraints is
\emph{active} in the sense that any further minimization would violate
it. One way of understanding why a ``pinning'' effect for the natural
occupation numbers is observed, is to posit that the generalized Pauli
constraints are among the active physical constraints.
While this effect may well occur, we show
in this work that \emph{quasi}-pinning appears in natural fermionic
systems: the eigenvalue constraints seem to play a role, but are not
active in the above sense.  The finding suggests that the physics of
the phenomenon is richer than previously appreciated. We will return
to the physical consequences quasi-pinning has on the structure of wave
functions after presenting the calculations for our model system.

\paragraph*{The Model.---}
\label{sec:Model}
In order to analyze possible pinning effects
analytically, we consider a model of $N$ identical
fermions subject to a harmonic external potential and a harmonic
interaction term:
\begin{equation}\label{Hamiltonian}
H = \sum_{i=1}^N\,\left(\,\frac{p_i^2}{2m}+\frac{1}{2}m\omega^2 x_i^2\,\right) + \frac{1}{2} D\,\sum_{i,j=1}^N \, (x_i - x_j)^2 \,.
\end{equation}
The corresponding eigenvalue problem without any symmetry constraint
can easily be solved by transforming the Hamiltonian to the one of
decoupled harmonic oscillators. Two eigenfrequencies appear: a
non-degenerate one describing the center of mass motion and
another $(N-1)$-fold degenerate frequency associated with the relative
motion. The natural length scales corresponding to these eigenmodes are
\begin{equation}\label{length}
l := \sqrt{\frac{\hbar}{m
\omega}} , \qquad \tilde l := \sqrt{\frac{\hbar}{m \omega \sqrt{1+ N
D/(m\omega^2)}}}.\nonumber
\end{equation}
By rescaling the energy and the length scale, the
fermion-fermion coupling constant $D$ can be absorbed by the term $m
\omega^2$. Hence
the spectrum $\lambda$ of a $1-$RDO corresponding to
an eigenstate of $H$ depends only on the relative fermion-fermion
interaction strength $\frac{N\,D}{m \omega^2}= \left( \frac{l}{\tilde l}\right)^4-1.$
In fact, it will prove slightly more convenient to parameterize the coupling using
\begin{equation}\label{delta}
\delta:=\ln\left(\frac{l}{\tilde l}\right) = \frac{1}{4}\ln\left(1+\frac{N\,D}{m \omega^2}\right) .
\end{equation}
Then, in the regime of weak interaction, $D$ and $\delta$ are in leading order proportional,
$D=\frac{4m\omega^2}{N}\delta + O(\delta^2)$.

To study the physical relevance of the generalized Pauli constraints we restrict the Hamiltonian $H$ to the fermionic Hilbert space $\wedge^N[\mathcal{H}_{\infty}]$, with $\mathcal{H}_{\infty} = L^2(\mathbb{R})$, i.e. we are treating the $N$ particles as fermions (without spin). In \cite{harmOsc2012} $H$ has been diagonalized and the ground state reads in spatial representation ($\vec{x} = (x_1,\ldots,x_N)$)
\begin{eqnarray}\label{groundstate}
\lefteqn{\Psi_N(\vec{x}) = \mbox{const}\,\times\, \prod_{1\leq i<j\leq N}(x_i-x_j)} \\
&& \times \exp\left[-\frac{1}{2 N}\left(\frac{1}{{l}^2}-\frac{1}{{\tilde l}^2}\right) (x_1+\ldots+x_N)^2 -\frac{1}{2} \frac{1}{{\tilde l}^2} \vec{x}^2\right] \,. \nonumber
\end{eqnarray}
(Note its structural similarity to Laughlin's ground state wave
function describing the fractional quantum Hall effect
\cite{Laughlin1983}.
Moreover, the polynomial in front of the exponential function is the
Vandermonde determinant and by omitting it we obtain the ground
state in the bosonic $N$-particle Hilbert space.)

\paragraph*{The spectrum and its properties.---}
\label{sec:specctrum}
We now outline the calculation of the spectrum
$\lambda(\delta)$ as a function of the coupling.  We omit details of
this tedious but mostly straight-forward computation, presenting the
final result, together with some conceptual insights obtained along
the way.

The $1-$RDO is calculated by integrating out $N-1$
coordinates of the $N$-fermion state $\rho_N(\vec{x}, \vec{x}')
=\Psi_N^{\ast}(\vec{x}) \Psi_N(\vec{x}')$. An exercise in Gaussian
integration and integration by parts yields
\begin{equation}
\rho_1(x,x') = p(x,x')\, \exp[-\alpha (x^2+x'^2) + \beta x x']\,,\nonumber
\end{equation}
where $p$ is a symmetric polynomial of degree $\binom{N}{2}$ in the
variables $x,x'$ originating from the Vandermonde determinant in
(\ref{groundstate}), and $\alpha$ and $\beta$ some constants depending
on $l,\tilde l$ and $N$.

If the fermions do not interact with each other, the ground state
$|\Psi_N\rangle$ is a single Slater determinant and the spectrum of
its $1-$RDO is trivial, i.e.
\begin{equation}\label{slater}
	\lambda(\delta=0)=(\underbrace{1,\ldots,1}_N,0,\ldots).
\end{equation}
The regime of weak interaction can be characterized by the
condition $|D| \ll m \omega^2 $ or, equivalently,
$\delta \approx 0$.
We thus employ degenerate perturbation theory to obtain
$\lambda(\delta)$  around $\delta=0$.
The reason we employ the parameter $\delta$ is that one can prove a
duality
\begin{equation}\label{duality}
	\lambda_k\left(\delta\right)= \lambda_k\left(-\delta\right)\qquad\, \forall k\qquad
\end{equation}
relating the spectra for attractive ($\delta<0$) and repulsive
($\delta>0$) fermion-fermion interaction (interestingly, that this duality
holds is not obvious on the level of ground-state wave functions). This
immediately implies that the expansion $\lambda(\delta)$ contains only
even order terms, simplifying the  perturbation theory.

The solution for $N=3$ reads:
\small \begin{eqnarray}
1-\lambda_1 &= & \frac{40}{729} {\delta}^6 - \frac{1390}{59049} {\delta}^8 + O(\delta^{10}) \nonumber \\
1-\lambda_2 &= & \frac{2}{9} {\delta}^4 - \frac{232}{729}{\delta}^6 + \frac{3926}{10935} {\delta}^8 +O(\delta^{10}) \nonumber  \\
1-\lambda_3 &=& \frac{2}{9}{\delta}^4 - \frac{64}{243}{\delta}^6 + \frac{81902}{295245}{\delta}^8 +O(\delta^{10}) \nonumber \\
\lambda_4 &= & \frac{2}{9}{\delta}^4 - \frac{64}{243}{\delta}^6 + \frac{73802}{295245}{\delta}^8 + O(\delta^{10}) \nonumber \\
\lambda_5 &= &\frac{2}{9} {\delta}^4 - \frac{232}{729} {\delta}^6 + \frac{3976}{10935} {\delta}^8 +O(\delta^{10}) \nonumber \\
\lambda_6 &= & \frac{40}{729} {\delta}^6 - \frac{2200}{59049} {\delta}^8 + O(\delta^{10}) \nonumber \\
\lambda_7 &= & \frac{80}{2187} {\delta}^8 + O(\delta^{10}) \nonumber \\
\lambda_8 &=& O(\delta^{10}) \nonumber \\
\vdots \,& &\,\,\,\vdots \label{spectrum}
\end{eqnarray}
\normalsize
Similar results follow for $N=2$. Note the non-trivial \emph{hierarchy} of the eigenvalues,
\begin{equation}\label{hierachy}
\lambda_k = c_k \, \delta^{2 k - 6}  + O(\delta^{2 k - 4}) \qquad,
\end{equation}
for all $k\geq 5$.
Moreover, the spectrum $\lambda$ for $\delta$ not too large is very
close to the one of a single Slater determinant. For instance,
$\lambda_i$, $i=1,2,3$ deviate from $1$ and  $\lambda_j$, $j\geq 4$
from $0$ only by at most $1$ percent if $|\delta|\leq 0.5$. This
emphasizes the relevance of the Pauli constraints
(\ref{PauliConstraint}).

\paragraph*{Quasi-Pinning by Generalized Pauli Constraints.---}
\label{sec:pinning}
Equipped with the explicit solution (\ref{spectrum}), we can proceed
to analyze whether the generalized Pauli constraints play a role for the
ground state. While the underlying $1-$particle Hilbert space
$\mathcal{H}_{\infty}$ is infinite-dimensional, the scaling (\ref{hierachy})
implies that the spectrum is strongly
concentrated on a low-dimensional subspace, at least for small
$\delta$. One can use this fact to deduce statements about the
position of the total eigenvalues from truncated information alone.

This can be understood from simple geometric considerations.
Let $d<d'<\infty$. Because a $d$-dimensional $1-$particle
Hilbert space can be imbedded into any (larger) $d'$-dimensional one, one sees
that the convex polytope $\mathcal{P}_{N,d}$ is nothing but the intersection
between $\mathcal{P}_{N,d'}$ and the set of spectra with only $d$
non-zero eigenvalues (see also Appendix \ref{sec:truncation}).
Hence any facet of $\mathcal{P}_{N,d}$ arises
from the intersection of some facet of $\mathcal{P}_{N,d'}$
with the subspace of said spectra. 
Formally, a 
facet
$F'$ of
$\mathcal{P}_{N,d'}$ consists of points saturating a generalized Pauli 
constraint 
\begin{equation}
   D'(\lambda)=  \kappa_0 +
   \sum_{i=1}^{d} \kappa_i \lambda_i
   +
   \sum_{i=d+1}^{d'}\kappa_i \lambda_i \geq 0.
\end{equation}
Denote the first two summands by $D(\lambda^{\mathrm{tr}})$,
where $\lambda^{\mathrm{tr}}= (\lambda_i)_{i=1}^d$ is the truncated
spectrum.
Clearly, $D(\lambda^{\mathrm{tr}})=0$ describes the restriction of the
facet to the $d$-dimensional setting.
Now assume the truncated spectrum $\lambda^{\mathrm{tr}}(\delta)$ is not pinned, i.e.
$D(\lambda^{\mathrm{tr}}(\delta))>0$, then the hierarchical
scaling (\ref{hierachy}) implies	
\begin{equation}\label{truncated}
	D'\big(\lambda(\delta)\big)
	= D\big(\lambda^{\mathrm{tr}}(\delta)\big)+ O(\delta^{2d-4}),
\end{equation}
which is positive for $\delta$ small enough. Hence the full spectrum
$\lambda'$ also fails to be pinned. The case $d'=\infty$ works in the
same way, up to some mild assumptions (see Appendix \ref{sec:truncation}).

We will now apply these considerations to our model.  First, we
truncate to $6$ dimensions, which has the advantage that the spectral
polytope corresponding to $\wedge^3[\mathcal{H}_6]$ is 3-dimensional
and can thus be visualized. In a second step, we take a seventh
eigenvalue into account. This setting turns out to be strong enough
to establish all statements we have mentioned above -- namely
that the total spectrum is not exactly pinned, but does lie
close to the boundary (quasi-pinned).

\begin{figure}[!h]
\includegraphics[width=8cm]{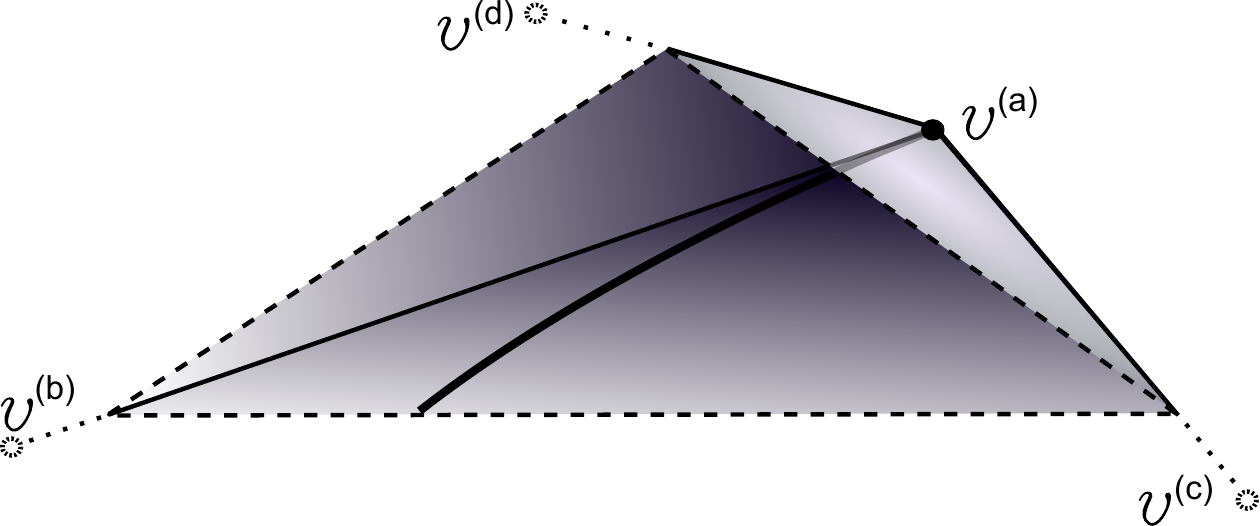}
\centering
\caption{Spectral ``trajectory'' $v(\delta)$ (thick line, partially
covered by facet, schematic) up to correction of order $\delta^8$ and
small part of the polytope $\mathcal{P}$ around vertex $v^{(a)}$
obtained by cutting $\mathcal{P}$ along the dashed lines}
\label{traj1}
\end{figure}

The simplest
non-trivial setting $\wedge^3[\mathcal{H}_6]$
becomes an appropriate description if
$\lambda_7,\lambda_8,\ldots \approx
0$. By (\ref{hierachy}), this condition is fulfilled if $\delta$ is small enough
that contributions of order $\delta^8$ can be neglected.
Choosing
$\lambda_4, \lambda_5$ and $\lambda_6$ as free parameters according to
(\ref{d=6a}), the corresponding polytope $\mathcal{P}_{3,6}$ effectively reduces
\cite{Kly1} to a 3-dimensional polytope $\mathcal{P}\subset \mathbb{R}^3$ with vertices,
\small \begin{eqnarray}
v^{(a)}= \,\,\left( 0,0,0\right)\,\,\,\,\,\,\,&,& v^{(b)}= \left( \frac{1}{2}\ \frac{1}{2},0\right) \nonumber \\
v^{(c)}= \left( \frac{1}{2} , \frac{1}{4} ,\frac{1}{4} \right)&,&v^{(d)}= \left( \frac{1}{2} , \frac{1}{2} ,\frac{1}{2}\right)\,.
\end{eqnarray}
\normalsize
Hence the vertex $v^{(a)}$ corresponds to single Slater determinants
and the 2-facet spanned by $v^{(a)},v^{(b)}$ and $v^{(c)}$ is defined
by $D^{(6)}=0$, which is here the one of interest and represents exact
pinning by constraint (\ref{d=6b}). We first illustrate schematically
our result (\ref{spectrum}) in Fig.~\ref{traj1}. There, the spectral ``trajectory'',
\begin{equation}
	v(\delta)=(\lambda_4(\delta),\lambda_5(\delta),\lambda_6(\delta)),\nonumber
\end{equation}
is shown as a thick line (neglecting effects of order $\delta^8$ and
higher).
It starts at the vertex
$v^{(a)}$ which corresponds  to the non-interacting situation $\delta=0$.
When increasing the fermion-fermion interaction, $v(\delta)$ leaves the
vertex $v^{(a)}$ and moves along the edge $(v^{(a)}, v^{(b)})$, the
distance to $v^{(a)}$ growing as $\delta^4$.
On the finer scale $\delta^6$,
$v(\delta)$ also moves away from the edge but is still \emph{pinned to the boundary}
of the polytope, lying on the 2-facet spanned by $v^{(a)}, v^{(b)}$ and
$v^{(c)}$. This is the bottom area in Fig.~\ref{traj1}, corresponding
to the constraint (\ref{d=6b}).

The pinning seems to disappear if we consider higher orders. From
(\ref{spectrum}), we can infer that the distance to the 2-facet $(v^{(a)}, v^{(b)},
v^{(c)})$ increases as $\delta^8$, \begin{equation}\label{pinning1}
D^{(6)}(\delta) = \zeta^{(6)} \,\delta^8 + O(\delta^{10})
\end{equation} with $\zeta^{(6)}= \frac{4510}{59049}$.
However, this calculation is inconclusive, as the distance to the
boundary is of the same order, $\delta^8$, as the truncation error (recall (\ref{truncated})).

To resolve the issue, we take another eigenvalue, $\lambda_7$, into
account. We thus work in the setting $\wedge^3[\mathcal{H}_7]$
with four constraints $D_i^{(7)}\geq 0$ for $i=1, \dots, 4$ \cite{Kly3}.
This setting is valid as long as $\lambda_8, \lambda_9,\ldots \approx0$ or
in other words we neglect terms of order $\delta^{10}$ or higher (but in
contrast to the setting $\wedge^3[\mathcal{H}_6]$ we include
$\delta^8-$terms). Since the polytope is now $6-$dimensional we cannot
present our results graphically anymore. The results (\ref{spectrum})
lead to ($i=1,2,3,4$)
\begin{equation}
D_i^{(7)}  = \zeta_i^{(7)}{\delta}^8+ O({\delta}^{10}) \,\,\,,\label{pinning2}
\end{equation}
with $\zeta_1^{(7)}=\frac{20 }{2187}$, $\zeta_2^{(7)}=\frac{10 }{243}$,
$\zeta_3^{(7)}=\frac{50 }{2187}$, $\zeta_4^{(7)}=\frac{2890 }{59049}$.
Here in the $\wedge^3[\mathcal{H}_7]$-analysis, the new result is that
all four distances $D^{(7)}_i$ are non-zero to a smaller order, $\delta^8$,
than the error of spectral truncation, $\delta^{10}$.
Together with the comments at the beginning of this section, this
shows that the absence of pinned spectra is genuine, rather than an artifact of
the truncation.
Given this, the quasi-pinning found here is surprisingly strong.  In
particular it exceeds by four additional orders the
(quasi-)pinning by Pauli's exclusion principle constraints
(\ref{PauliConstraint}),
\begin{equation}\label{PauliConstraint2}
0\leq 1-\lambda_2(\delta), 1-\lambda_3(\delta), \lambda_4(\delta),
\lambda_5(\delta) = \frac{2}{9}\,\delta^4 + O(\delta^{6}).
\end{equation}

\paragraph*{Generalizing Hartree-Fock.---}
In this section, we discuss what conclusions can be
drawn about the $N-$fermion state $|\Psi\rangle$ itself, given information just
about the position of the eigenvalues of the corresponding 1-RDO relative to the boundary
of the spectral polytope. In this way, quasi-pinned spectra are endowed
with a physical significance.
To this end, recall the basic fact that the spectrum
$\lambda_{\mathrm{Sl}}=(1, \dots, 1, 0, \dots ,0)$ can \emph{only}
arise from a Slater determinant $|\Psi\rangle=|1,\ldots,N\rangle$.
It is well-known that this statement is stable under small
deviations: if $\lambda\approx \lambda_{\mathrm{Sl}}$, then $|\Psi\rangle$ is
well-approximated by a Slater determinant (see \cite{Bach1992} or Appendix \ref{sec:quasiSelectionRule}).

For exactly pinned spectra, there is a simple generalization of these
observations. In \cite{Kly1}, it is stated that constraint
(\ref{d=6b}) can be saturated only by states of the form
\begin{equation*}
	|\Psi\rangle = \alpha |1,2,3\rangle + \beta |1,4,5\rangle + \gamma
	|2,4,6\rangle,\nonumber
\end{equation*}
a fact is dubbed ``selection rule'' for Slater determinants (see also
Appendix \ref{sec:selectionrule}). The general statement reads: if
$D(\lambda)\geq 0$ is a generalized Pauli constraint, then $D(\lambda)=0$ can
only be achieved by states $|\Psi\rangle$ which are superpositions of those
Slater determinants whose (unordered) spectra also saturate $D$.

What is more important, a stable version of this statement applying to
quasi-pinned states can be found---at least for specific situations. In
the Appendix \ref{sec:quasiSelectionRule}, we show that for the Borland-Dennis
setting, spectra in the vicinity of the facet corresponding to constraint (\ref{d=6b})
are approximately of the form above. In particular, quasi-pinned states are close
to states containing fairly low amounts of multi-partite entanglement as
quantified by the Schmidt number \cite{Eisert2001}. In \cite{Vrana2008} a new entanglement
measure has been suggested, which, for the Borland-Dennis setting, naturally separates exactly
pinned and non-pinned states. We believe that these
findings open up a potentially significant avenue for investigating the
structure of fermionic ground states via their natural occupation numbers---
generalizing a program that has long been carried out for the Hartree-Fock case
\cite{Bach1992}.

We close by speculating that these insights could give rise to
improved numerical procedures.
The idea is to replace the
ground state ansatz of one single Slater determinant by the states
corresponding to the points lying on the (quasi-)pinning polytope
facet. In contrast to the configuration interaction (CI) methods in
quantum chemistry which improve the Hartree-Fock approximation by
adding several arbitrary Slater determinants to the Hartree-Fock state
our method would add only a few but carefully chosen additional Slater
determinants.

\paragraph*{Conclusions.---} \label{sec:conc} For a natural model of
interacting fermions in a harmonic trap we analytically calculated the
leading orders of the eigenvalues of the $1-$RDO corresponding to the
fermionic ground state as function of $\delta$, a measure for the
fermion-fermion interaction strength.  The investigation of the
generalized Pauli constraints has shown that none of them is completely
saturated, which might be a generic property of all continuous models of
interacting fermions. In particular, the findings show that it is likely extremely
challenging to use numerical methods to distinguish between genuinely
pinned and mere quasi-pinned states. This underscores the need for
analytical analyses, first provided here.
On the other hand the pinning up to corrections of order $\delta^8$ we
found here is surprisingly strong. In particular it exceed the one by
the Pauli exclusion principle constraints (\ref{PauliConstraint}), which are pinning
up to corrections of order $\delta^4$ only.

\paragraph*{Acknowledgements.---}
We thank F.\hspace{0.5mm}Verstraete for helpful discussions.
CS and MC acknowledge support from the Swiss National Science
Foundation (grants PP00P2-128455 and 20CH21-
138799), the National Centre of Competence in Research
`Quantum Science and Technology' and the German Science
Foundation (grant CH 843/2-1). DG's research is supported by
the Excellence Initiative of the German Federal and State
Governments (grant ZUK 43).

\section*{Appendix}
This appendix is split into four sections. The first one introduces the notation and repeats the solution of the fermionic quantum marginal problem. In the second section we explain how to simplify the pinning analysis by truncating the spectrum. This amounts to the proof of statement (13), a relation connecting polytope distances of the correct and truncated marginal setting. The third section introduces a selection rule, which explains how the structure of a $N-$fermion state simplifies if its natural occupation numbers are exactly pinned to some Pauli facet and applies it to the Borland-Dennis setting. In the last section we present a modification of this selection rule for the case of only approximate pinning. This then justifies our Hartree-Fock generalization.

\subsection{Notation and Fermionic Quantum Marginal Problem.---}
\label{sec:notation}
The problem of determining all spectra
\begin{equation}\label{specordered}
\lambda=(\lambda_i)_{i=1}^{d'}\qquad, \,1\geq \lambda_1\geq \lambda_2\geq \ldots \geq \lambda_{d'}\geq 0  \qquad
\end{equation}
of $1-$particle reduced density operators ($1-$RDO) $\rho_1$ arising from some pure $N-$fermion state $|\Psi\rangle \in \wedge^N[\mathcal{H}_{d'}]$,
\begin{equation}
\rho_1 \equiv N\,\mbox{tr}_{N-1}[|\Psi\rangle \langle \Psi|]
\end{equation}
by tracing out $N-1$ particles, is known as the fermionic quantum marginal problem of the setting $\wedge^N[\mathcal{H}_{d'}]$. Here $d' \in \mathbb{N}\cup \{\infty\}$, $\mathcal{H}_{d'}$ is the $d'-$dimensional separable $1-$particle Hilbert space and we use the trace normalization convention,
\begin{equation}\label{norm}
\mbox{tr}[\rho_1] = \lambda_1+\ldots+\lambda_{d'} = N ,
\end{equation}
common in quantum chemistry.

For $d'$ finite, the family of possible spectra (we call them compatible w.r.t $\wedge^N[\mathcal{H}_{d'}]$),
is described by finitely many independent conditions $\{C_i\}$, the generalized Pauli constraints. Each of them has the form
\begin{equation}\label{margcon}
C_i\,:\,\,\,D_i(\lambda) = \kappa_0 + \kappa_1 \lambda_1+\ldots \kappa_{d'} \lambda_{d'} \geq 0 ,
\end{equation}
$\kappa_0,\ldots,\kappa_{d'} \in \mathbb{Z}$ and describes a half-space $V_i$ of $\mathbb{R}^{d'}$. These constraints together with the trivial conditions (\ref{specordered}) and (\ref{norm}) define the polytope $\mathcal{P}_{N,d'} \subset \mathbb{R}^{d'}$
of possible spectra. In that sense every constraint (\ref{margcon}) gives rise to a facet $F_i$ of this polytope,
\begin{equation}
F_i = \{\lambda \in \mathcal{P}_{N,d'}\,|\,D_i(\lambda)= 0\}.
\end{equation}
Note that besides these Pauli facets there are also further facets, those corresponding to the trivial constraints (\ref{specordered}), but they will not be of interest in our work. Moreover, the quantity $D_i(\cdot)$, which is only defined up to a positive factors, defines after fixing this factor (i.e. the parameters $\kappa_i$) a measure for the distance of spectra to the corresponding facet $F_i$. For the case of $d'$ finite it coincides up to normalization with the Euclidean distance, $\mbox{dist}_2(\mu,F_i)=\frac{D_i(\mu)}{\|\kappa\|_2}$, $\kappa= (\kappa_1,\ldots,\kappa_{d'})$.

For the case $d'=\infty$ the set $\mathcal{P}_{N,\infty}$ of compatible spectra is not explicitly known yet. Nevertheless, for our work we assume  that it is also defined by a family of linear inequalities
\begin{equation}
D_j^{(\infty)}(\lambda) = \kappa_0+\kappa_1 \lambda_1+ \kappa_2 \lambda_2+ \ldots \geq 0\,.
\end{equation}
The results on truncation of the spectrum and the relation of polytope $\mathcal{P}_{N,d}$ and $\mathcal{P}_{N,d'}$, $d<d'$ finite presented in Appendix \ref{sec:truncation} strongly emphasizes that this assumption is justified. Moreover, the involved fact that the $l^1-$closure $\overline{\mathcal{P}}_{N,d}$ is convex also suggests this assumption.

Finally, we still make some comments on the meaning of natural orbitals $\{|k\rangle\}$, the eigenvectors of the $1-$RDO,
\begin{equation}
\rho_1 = \sum_{k=1}^{d'} \, \lambda_k \, |k\rangle \langle k| ,
\end{equation}
and their utility for applications.

These natural orbitals induced by a fixed state $|\Psi\rangle \in \wedge^N[\mathcal{H}_{d'}]$, $d' \in \mathbb{N}\cup \{\infty\}$, define a basis $\mathcal{B}_1:=\{|k\rangle\}_{k=1}^{d'}$ for the $1-$particle Hilbert space $\mathcal{H}_{d'}$. For ease of notation we skip the argument $\Psi$ of $|i(\Psi)\rangle$. Basis $\mathcal{B}_1$ then induces the basis $\mathcal{B}_N$ for $\wedge^{N}[\mathcal{H}_{d'}]$ of corresponding Slater determinants ($1\leq i_1<\ldots<i_N\leq d'$)
\begin{equation}
|\mathbf{i}\rangle \equiv |i_1,\ldots,i_N\rangle \equiv \mathcal{A}_N [|i_1\rangle \otimes \ldots \otimes |i_N\rangle] ,
\end{equation}
where $\mathcal{A}_N$ is the anti-symmetrizing operator on the $N-$particle Hilbert space $\mathcal{H}_{d'}^{\,\,\,\otimes^N}$.
By expanding $|\Psi\rangle$ w.r.t. to $\mathcal{B}_N$,
\begin{equation}\label{expansion}
|\Psi\rangle = \sum_{\mathbf{i}}\,c_{\mathbf{i}}\,|\mathbf{i}\rangle
\end{equation}
the natural occupation numbers are given by
\begin{equation}\label{noncoef}
\lambda_k = \sum_{\mathbf{i},\,k \in \mathbf{i}}\,|c_{\mathbf{i}}|^2 .
\end{equation}
To compare marginal settings of different dimensions, $d, d'$ with $d < d'\leq \infty$ we imbed $\mathcal{H}_{d}$ into $\mathcal{H}_{d'}$,
\begin{equation}
\mbox{span}\{|i\rangle\}_{i=1}^{d}\equiv\mathcal{H}_{d} \leq \mathcal{H}_{d'} \equiv \overline{\mbox{span}\{|i\rangle\}_{i=1}^{d'}} ,
\end{equation}
where the closure is only relevant for the case $d'$ infinite.
In the same way,
\begin{equation}
\wedge^N[\mathcal{H}_{d}] \leq \wedge^N[\mathcal{H}_{d'}] .
\end{equation}
Indeed, according to (\ref{expansion}), we find that every state
\begin{equation}\label{expansionsmall}
|\Psi\rangle =  \sum_{1\leq i_1 <\ldots<i_N \leq d} c_{\mathbf{i}} \,|\mathbf{i}\rangle \,\,\,\,\in \wedge^N[\mathcal{H}_{d}]
\end{equation}
can be imbedded into $\wedge^N[\mathcal{H}_{d'}]$ by
\begin{equation}\label{expansionlarge}
|\Psi'\rangle =  \sum_{1\leq i_1 <\ldots<i_N \leq d} c_{\mathbf{i}} \,|\mathbf{i}\rangle \,\,\,\,\in \wedge^N[\mathcal{H}_{d'}]\,,
\end{equation}
and all the other coefficients $c_{\mathbf{i}}$ in (\ref{expansionlarge}), those with $i_N>d$, vanish. We used here different symbols for the states $|\Psi\rangle$ and $|\Psi'\rangle$ to distinguish between the two different spaces $\wedge^N[\mathcal{H}_{d}]$ and $\wedge^N[\mathcal{H}_{d'}]$ to which they belong. This subtle difference is becoming relevant if we determine the natural occupation numbers $\lambda'$ of $|\Psi'\rangle$ (recall (\ref{noncoef})), \begin{equation}\label{nonimbedded}
\lambda' = (\lambda_1,\ldots,\lambda_d,\underbrace{0,\ldots,0}_{d'-d})
\end{equation}
differing from $\lambda=(\lambda_1,\ldots\lambda_d)$ by additional zeros. In the following, to simplify the notation, we will use the same symbols for mathematical objects and their imbeddings into larger spaces.

\subsection{Truncation of the Spectrum.---}\label{sec:truncation}
In our work we have determined the ``trajectory'' of spectra
\begin{equation}
\lambda(\delta) = (\lambda_i(\delta))_{i=1}^{\infty} \in \mathcal{P}_{3,\infty}  ,
\end{equation}
of the $1-$RDO corresponding to the ground state of a $3-$fermion model with relative fermion-fermion interaction strength $\delta$. The goal was then to show that for $\delta$ not too large, $\lambda(\delta)$ is almost but not exactly saturating some of the generalized Pauli constraints of its setting $\wedge^N[\mathcal{H}_{\infty}]$. Geometrically this means that the vector $\lambda(\delta)$ is very close to some Pauli facet $F_i$ of $\mathcal{P}_{3,\infty}$. In that case we say that the spectrum is quasi-pinned to the facet $F_i$.
Since $\mathcal{P}_{3,\infty}$ is not explicitly known and quite involved (it is described by infinitely many constraints on infinitely many eigenvalues), we have truncated the spectrum and simplified the pinning analysis by considering only the largest $d$ eigenvalues,
\begin{equation}\label{spectrunc}
\lambda^{\mathrm{tr}} = (\lambda_1,\ldots,\lambda_d) ,
\end{equation}
and analyzed the saturation of the constraints corresponding to the setting $\wedge^3[\mathcal{H}_{d}]$. The following fact justifies this approach:  For $d<d'$ every Pauli facet $F$ of $\mathcal{P}_{N,d}$ is contained in some Pauli facet $F'$ of $\mathcal{P}_{N,d'}$, i.e. $F$ is the intersection of $F'$ with the hyperplane of spectra with only $d$ non-zero eigenvalues.
Then, for small $\lambda_{d+1},\lambda_{d+2},\ldots$, small distance of $\lambda^{\mathrm{tr}}$ to $F$ translates to small distances of $\lambda$ to $F'$ modulo an error of order of the largest neglected eigenvalue, $\lambda_{d+1}$.
To illustrate this, we present the example $\wedge^3[\mathcal{H}_6]$, which is one of the two settings studied in our work. There one generalized Pauli constraint reads \cite{Borl1972, Rus1, Rus2, Kly3}
\begin{equation}
D^{(6)}(\lambda):=2-(\lambda_1+\lambda_2+\lambda_4) \geq 0 \label{d=6g}\,\,.
\end{equation}
For the setting $\wedge^3[\mathcal{H}_{\infty}]$ the known constraint \cite{Kly3}
\begin{equation}\label{margconinf}
D^{(\infty)}(\lambda)=2-(\lambda_1+\lambda_2+\lambda_4+\lambda_7+\lambda_{11}+\lambda_{16} +\ldots )\geq 0\,,
\end{equation}
coincides with constraint (\ref{d=6g}) up to a linear combination of eigenvalues $\lambda_7,\lambda_{11},\lambda_{16},\ldots$, which where neglected in the truncated setting.

A first important step in proving the universality of this relation between polytope distances of correct and truncated setting is the next lemma:
\begin{lemma}\label{lemzeros}
Consider the quantum marginal problems of the two settings $\wedge^N[\mathcal{H}_d]$ and $\wedge^N[\mathcal{H}_{d'}]$, $d < d' \in \mathbb{N}\cup \{\infty\}$ and let $\lambda = (\lambda_1,\ldots,\lambda_d)$ be a spectrum. Then,
\begin{eqnarray}
(\lambda_1,\ldots,\lambda_d)\,\, \mbox{compatible w.r.t.} \,\,\wedge^N[\mathcal{H}_d] &&\nonumber \\
\Leftrightarrow \qquad \qquad\qquad \nonumber &&\\
(\lambda_1,\ldots,\lambda_d,\underbrace{0,\ldots,0}_{d'-d})\,\, \mbox{compatible w.r.t.}\,\, \wedge^N[\mathcal{H}_{d'}] &&\,\,.
\end{eqnarray}
For the corresponding polytopes this means
\begin{equation}
\mathcal{P}_{N,d} = \mathcal{P}_{N,d'}|_{\lambda_{d+1},\lambda_{d+2},\ldots=0} ,
\end{equation}
the polytope $\mathcal{P}_{N,d'}$ intersected with the hyperplane
given by $\lambda_{d+1},\lambda_{d+2},\ldots=0$ coincides with
$\mathcal{P}_{N,d}$.
\end{lemma}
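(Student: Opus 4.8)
The plan is to prove the two implications of the stated equivalence separately, and then to read off the geometric reformulation as an immediate corollary. For the forward direction ($\Rightarrow$), I would simply invoke the embedding already constructed in the notation section. If $(\lambda_1,\ldots,\lambda_d)$ is compatible w.r.t. $\wedge^N[\mathcal{H}_d]$, pick a witnessing state $|\Psi\rangle \in \wedge^N[\mathcal{H}_d]$ whose $1$-RDO has this spectrum, and embed it into $|\Psi'\rangle \in \wedge^N[\mathcal{H}_{d'}]$ as in (\ref{expansionsmall})--(\ref{expansionlarge}). By (\ref{nonimbedded}) the $1$-RDO of $|\Psi'\rangle$ then has spectrum $(\lambda_1,\ldots,\lambda_d,0,\ldots,0)$, so this padded spectrum is compatible w.r.t. $\wedge^N[\mathcal{H}_{d'}]$. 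This direction requires no new work beyond the embedding already set up.

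The substance of the lemma lies in the reverse direction ($\Leftarrow$). Here I would start from a witnessing state $|\Psi'\rangle \in \wedge^N[\mathcal{H}_{d'}]$ whose $1$-RDO $\rho_1'$ has spectrum $(\lambda_1,\ldots,\lambda_d,0,\ldots,0)$, and expand $|\Psi'\rangle$ in the Slater determinants built from its \emph{own} natural orbitals $\{|k\rangle\}$ as in (\ref{expansion}). The crucial observation is that the occupation-number formula (\ref{noncoef}) expresses each $\lambda_k$ as a sum of non-negative terms $|c_{\mathbf{i}}|^2$ over configurations with $k \in \mathbf{i}$. Since $\lambda_k = 0$ for every $k > d$, each such term must vanish, forcing $c_{\mathbf{i}} = 0$ for every configuration $\mathbf{i}$ that involves an orbital of index exceeding $d$. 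Hence $|\Psi'\rangle$ is supported entirely on Slater determinants drawn from $\{|1\rangle,\ldots,|d\rangle\}$, i.e. $|\Psi'\rangle \in \wedge^N[\mathcal{H}_d]$ with $\mathcal{H}_d := \operatorname{span}\{|k\rangle\}_{k=1}^d$. Regarded as a state of this smaller setting, its $1$-RDO is the restriction $\sum_{k=1}^d \lambda_k |k\rangle\langle k|$, whose spectrum is exactly $(\lambda_1,\ldots,\lambda_d)$, witnessing compatibility w.r.t. $\wedge^N[\mathcal{H}_d]$. Because compatibility is invariant under the unitary freedom in choosing the $1$-particle basis, identifying $\mathcal{H}_d$ with the span of the first $d$ natural orbitals loses no generality.

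The main obstacle I anticipate is the case $d' = \infty$, where one must argue that the expansion (\ref{expansion}) and the formula (\ref{noncoef}) remain meaningful. The point to exploit is that a spectrum of the form $(\lambda_1,\ldots,\lambda_d,0,0,\ldots)$ corresponds to a $1$-RDO of finite rank at most $d$: its range is a finite-dimensional subspace, and the argument of the previous paragraph confines $|\Psi'\rangle$ to the $N$-fold antisymmetric tensor power of this range, after which no convergence issues remain. Finally, the geometric statement $\mathcal{P}_{N,d} = \mathcal{P}_{N,d'}|_{\lambda_{d+1},\lambda_{d+2},\ldots=0}$ is nothing but a restatement of the proven equivalence: a point $(\lambda_1,\ldots,\lambda_d)$ lies in $\mathcal{P}_{N,d}$ precisely when its zero-padding lies in the slice of $\mathcal{P}_{N,d'}$ by the hyperplane $\lambda_{d+1}=\lambda_{d+2}=\cdots=0$.
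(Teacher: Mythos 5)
Your proposal is correct and follows essentially the same route as the paper: the forward direction via the zero-padded embedding of equations (\ref{expansionsmall})--(\ref{nonimbedded}), and the reverse direction by expanding the witnessing state in Slater determinants of its own natural orbitals and using (\ref{noncoef}) to conclude $c_{\mathbf{i}}=0$ whenever $\mathbf{i}$ contains an index exceeding $d$. Your additional remarks on the $d'=\infty$ case (finite rank of the $1$-RDO) and on the unitary basis freedom make explicit points the paper leaves implicit, but do not change the argument.
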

\begin{proof}
The direction ``$\Rightarrow$'' was already explained at the end of Section\ref{sec:notation}.
To prove ``$\Leftarrow$'' we show that a state $|\Psi'\rangle$ expanded according to (\ref{expansion}),
\begin{equation}
|\Psi'\rangle =  \sum_{1\leq i_1 <\ldots<i_N \leq d'} c_{\mathbf{i}} \,|\mathbf{i}\rangle \,\,\,\,\in \wedge^N[\mathcal{H}_{d'}]\,,
\end{equation}
with natural occupation numbers $(\lambda_1,\ldots,\lambda_d,\underbrace{0,\ldots,0}_s)$ contains only Slater determinants $|\mathbf{i}\rangle$, with $i_1,\ldots,i_N\leq d$. But this is clear
due to (\ref{noncoef}), which then yields
\begin{equation}
\forall \,k\,>d\,:\,\,\, 0 \stackrel{!}{=}\lambda_k = \sum_{\mathbf{i},\,k \in \mathbf{i}}\,|c_{\mathbf{i}}|^2 .
\end{equation}
Hence $c_{\mathbf{i}} = 0$ if $i_N>d$.
\end{proof}
What does Lemma \ref{lemzeros} imply for the relation between the families of generalized Pauli constraints of two settings?
Let us consider two settings with $d,d'$ finite, $d<d'$. Every constraint $D_j'$ for the setting $\wedge^N[\mathcal{H}_{d'}]$
is linear and hence its restriction
\begin{equation}\label{constrestricted}
\hat{D}_j'(\lambda_1,\ldots,\lambda_d) \equiv D_j'(\lambda_1,\ldots,\lambda_d,0,\ldots)  \geq 0
\end{equation}
to the hyperplane defined by $0=\lambda_{d+1},\lambda_{d+2},\ldots$ is also a linear constraint in the remaining coordinates $\lambda_1,\ldots,\lambda_d$. How is the half space $V_j\subset \mathbb{R}^d$ corresponding to (\ref{constrestricted}) related to the polytope $\mathcal{P}_{N,d}$? Lemma \ref{lemzeros} states that
\begin{equation}
\mathcal{P}_{N,d} \subset V_j
\end{equation}
and
\begin{equation}
\mathcal{P}_{N,d} = \cap_j V_j|_{\ast} ,
\end{equation}
where the star $\ast$ denotes here the restriction to spectra, i.e. ordered and normalized vectors.
There are two possible relations between $V_j$ (or $V_j|_{\ast}$) and $\mathcal{P}_{N,d}$. They are illustrated in Figure \ref{facetproj} in form of a simplified $2-$dimensional picture:
\begin{figure}[!h]
\includegraphics[width=8cm]{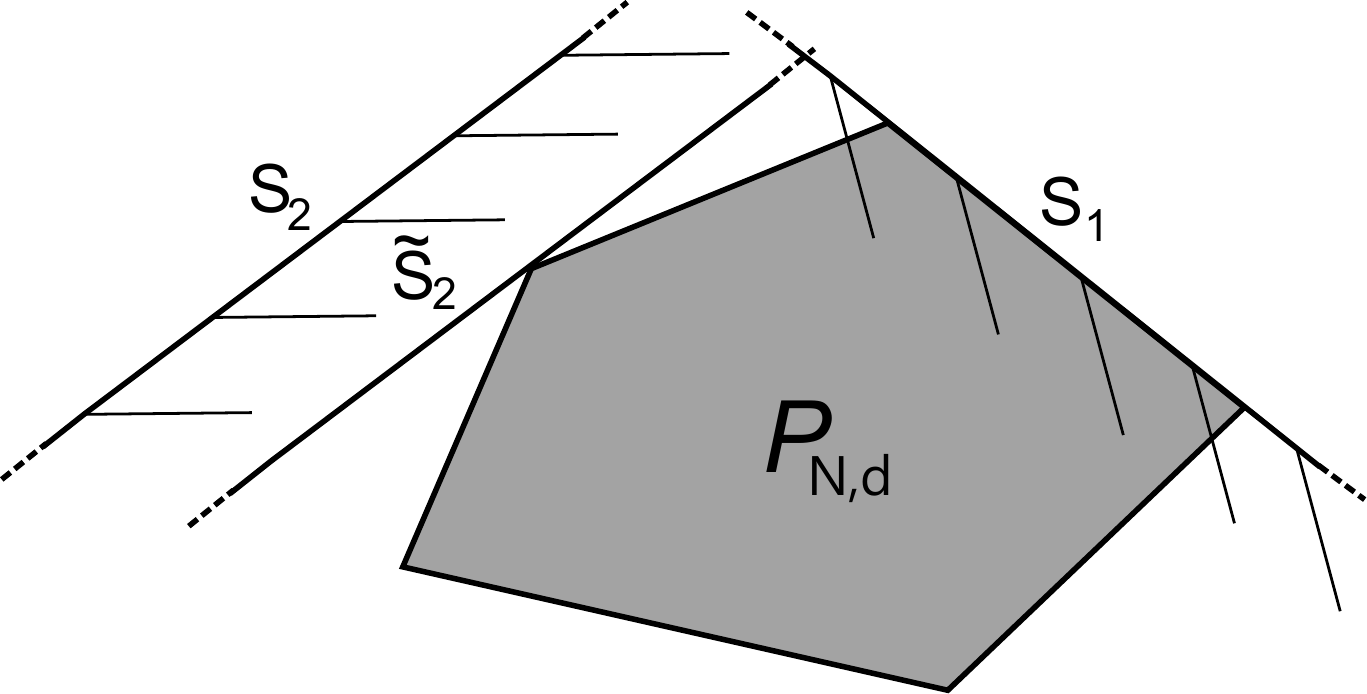}
\centering
\caption{Polytope $\mathcal{P}_{N,d}$ and two restricted generalized Pauli constraints $\hat{D}_1, \hat{D}_2 \geq 0$ with boundaries $S_1, S_2$ arising from two constraints $D_1,D_2\geq 0$ belonging to a higher dimensional marginal settings $\wedge^N[\mathcal{H}_{d'}]$.}
\label{facetproj}
\end{figure}
There, we consider two half spaces $V_1$ and $V_2$ corresponding to the ``restricted'' constraints $\hat{D}'_1\geq 0$ and $\hat{D}'_2\geq 0$ with boundaries $S_1$ and $S_2$ and orientation indicated by stripes. Such hyperplanes can either contain a facet of maximal (example $S_1$) or lower dimension of $\mathcal{P}_{N,d}$ or they lie outside of $\mathcal{P}_{N,d}$ (example $S_2$). The third case of a proper intersection is not possible due to Lemma \ref{lemzeros}. Every constraint $D'$ with boundary $S$ of its restriction $\hat{D}'$ lying outside of $\mathcal{P}_{N,d}$ is a constraint, which is irrelevant
for the pinning analysis since it has the form
\begin{equation}
D'(\lambda) = c + \tilde{D}(\lambda^{\mathrm{tr}}) + O(\lambda_{d+1})\,,
\end{equation}
where $\tilde{D}(\lambda^{\mathrm{tr}})\geq 0$ is a constraint of the setting $\wedge^N[\mathcal{H}_d]$ with a boundary shown in Figure \ref{facetproj} as hyperplane $\tilde{S}_2$ and $c >0$ is some offset.
Hence if the spectrum decays sufficiently fast, constraint $D'$ is not saturated at all due to the offset $c$ and thus irrelevant.
Moreover, for every Pauli facet of $\mathcal{P}_{N,d}$ corresponding to some constraint $D>0$, Lemma \ref{lemzeros} guarantees the existence of a constraint $D'>0$ in the larger setting whose projection $\hat{D'}$ coincides with $D$.
We summarize these insights by stating
\begin{lemma}\label{lemdistancemodif}
Given two marginal settings $\wedge^N[\mathcal{H}_d]$ and $\wedge^N[\mathcal{H}_{d'}]$ with $d < d' \in \mathbb{N}$. Every generalized Pauli constraint $D'\geq 0$ of the setting $\wedge^N[\mathcal{H}_{d'}]$ relevant for the pinning analysis is given by a linear modification of some generalized Pauli constraint $D\geq 0$ of the setting $\wedge^N[\mathcal{H}_{d}]$,
\begin{equation}\label{distancemodif}
D'(\lambda) = D(\lambda^{\mathrm{tr}}) + O(\lambda_{d+1}) .
\end{equation}
\end{lemma}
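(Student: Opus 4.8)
The plan is to begin with an arbitrary generalized Pauli constraint $D'\geq 0$ of the larger setting $\wedge^N[\mathcal{H}_{d'}]$ and separate its dependence on the truncated coordinates from the tail. Writing $D'(\lambda)=\kappa_0+\sum_{i=1}^{d'}\kappa_i\lambda_i$, I would regroup the terms as
\begin{equation}
D'(\lambda)=\hat{D}'(\lambda^{\mathrm{tr}})+\sum_{i=d+1}^{d'}\kappa_i\lambda_i ,
\end{equation}
where $\hat{D}'(\lambda^{\mathrm{tr}})=D'(\lambda_1,\ldots,\lambda_d,0,\ldots)$ is the restriction defined in (\ref{constrestricted}). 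Since the spectrum is ordered, $\lambda_{d+1}\geq\lambda_{d+2}\geq\ldots\geq 0$, and the sum runs over finitely many indices, the tail is bounded by $|\sum_{i=d+1}^{d'}\kappa_i\lambda_i|\leq(\sum_{i=d+1}^{d'}|\kappa_i|)\,\lambda_{d+1}$, which is exactly the error $O(\lambda_{d+1})$ claimed in (\ref{distancemodif}). It then remains to identify $\hat{D}'$ with a genuine Pauli constraint $D$ of the smaller setting whenever $D'$ is relevant for pinning.

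Second, I would invoke Lemma \ref{lemzeros}. Because $\mathcal{P}_{N,d}$ is the intersection of $\mathcal{P}_{N,d'}$ with the hyperplane $\{\lambda_{d+1}=\lambda_{d+2}=\ldots=0\}$, and $D'\geq 0$ holds on all of $\mathcal{P}_{N,d'}$, the restricted functional obeys $\hat{D}'\geq 0$ on $\mathcal{P}_{N,d}$; equivalently, $\mathcal{P}_{N,d}$ lies entirely in the half-space $\{\hat{D}'\geq 0\}$. This nonnegativity is the crucial structural input: it forces the boundary hyperplane $\{\hat{D}'=0\}$ to be a \emph{supporting} hyperplane of $\mathcal{P}_{N,d}$, so it can only touch the polytope along a face or miss it entirely. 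The ``proper intersection'' case, in which $\{\hat{D}'=0\}$ would cut through the interior, is thereby excluded. This is precisely the dichotomy illustrated in Figure \ref{facetproj}.

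Third, I would split into the two admissible cases. If $\{\hat{D}'=0\}$ misses the polytope, then $\hat{D}'$ is strictly positive on $\mathcal{P}_{N,d}$, so $\hat{D}'=c+\tilde{D}$ with offset $c>0$ and $\tilde{D}\geq 0$ a constraint of the smaller setting; for a sufficiently fast-decaying spectrum this is never (quasi-)saturated and hence irrelevant for pinning. If instead $\{\hat{D}'=0\}$ touches $\mathcal{P}_{N,d}$ in a facet, then $\hat{D}'$ and the facet-defining Pauli constraint $D$ both vanish on a common full-dimensional subset of that hyperplane and are both nonnegative on $\mathcal{P}_{N,d}$, so they define the same hyperplane and $\hat{D}'=c\,D$ for some $c>0$. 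Absorbing $c$ into the scale freedom of $D$ gives $\hat{D}'(\lambda^{\mathrm{tr}})=D(\lambda^{\mathrm{tr}})$ and thus (\ref{distancemodif}); the reverse existence direction follows directly from Lemma \ref{lemzeros}.

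The main obstacle I anticipate is the intermediate possibility in which $\{\hat{D}'=0\}$ supports $\mathcal{P}_{N,d}$ only along a lower-dimensional face rather than a full facet. There $\hat{D}'$ need not be proportional to a single Pauli constraint but only to a nonnegative combination of several, so the clean identification $\hat{D}'=c\,D$ can fail. The plan is to argue that such a $D'$ is not relevant for the pinning analysis in the sense used here: quasi-pinning is measured against facets, and a functional vanishing only on a lower-dimensional face is dominated by (a positive combination of) the facet constraints whose intersection contains that face, so its near-saturation is already captured at the facet level. Making this final reduction precise, rather than the routine tail estimate or the supporting-hyperplane dichotomy, is where the real care is needed.
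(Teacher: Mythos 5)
Your proposal follows essentially the same route as the paper's own argument: the identical decomposition of $D'$ into its restriction $\hat{D}'(\lambda^{\mathrm{tr}})$ plus an $O(\lambda_{d+1})$ tail, the same appeal to Lemma \ref{lemzeros} to get the supporting-hyperplane dichotomy of Figure \ref{facetproj} (facet contact versus strict positivity, proper intersection excluded), and the same disposal of the offset case $c>0$ as irrelevant for pinning. If anything, you are more explicit than the paper about the subtlety of hyperplanes touching only a lower-dimensional face, which the paper acknowledges in passing but silently absorbs into the qualifier ``relevant for the pinning analysis.''
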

Finally, we remark that for the important case $d'$ infinite effectively the same results holds but one has to deal with one subtlety. Since $\mathcal{P}_{N,\infty}$ is described by infinitely many constraints Lemma \ref{lemzeros} guarantees for every constraint $D\geq 0$ of the setting $\wedge^N[\mathcal{H}_d]$, only the existence of a sequence of constraints $D'_j\geq 0$ whose restrictions $\hat{D}'_j\geq 0$ converge to the constraint $D\geq 0$. This means that condition (\ref{distancemodif}) in Lemma \ref{lemdistancemodif} holds up to a small error $\varepsilon$,
\begin{equation}
D_{\varepsilon}'(\lambda) = \varepsilon + D(\lambda^{\mathrm{tr}}) + O(\lambda_{d+1}) ,
\end{equation}
which can be made arbitrarily small by choosing appropriate constraints $D'_{\varepsilon}$. Hence, to minimize the technical effort we assume in our work that Lemma \ref{lemdistancemodif} holds in its original form also for the case $d'$ infinite.

\subsection{\label{sec:selectionrule} Selection Rule.---}
In this section we state a selection rule which explains how the structure of the $N-$fermion state
$|\Psi\rangle \in \wedge^N[\mathcal{H}_d]$ simplifies if the spectrum of the corresponding $1-$RDO is pinned to
some Pauli facet of $\mathcal{P}_{N,d}$. Moreover, we apply it for the setting $\wedge^3[\mathcal{H}_6]$.

Let's consider a state $|\Psi\rangle$ with natural occupation numbers $\lambda = (\lambda_i)_{i=1}^d$ saturating some generalized Pauli constraint
\begin{equation}\label{margcon2}
D(\lambda) = \kappa_0 + \kappa_1 \lambda_1+\ldots \kappa_{d} \lambda_{d} \geq 0 .
\end{equation}
In \cite{Kly1}, by introducing the creation and annihilation operator $a^{\dagger}_k, a_k$ of a fermion in the natural orbital $|k\rangle$ and the particle number operators $N_k \equiv a^{\dagger}_k a_k$,
an important condition is stated, which $|\Psi\rangle$ in that case satisfies:
\begin{equation}\label{pinningeigen}
\hat{D}|\Psi\rangle \equiv \left(\kappa_0 \mathrm{Id} + \kappa_1 N_1+\ldots \kappa_{d} N_{d}\right) |\Psi\rangle = 0 .
\end{equation}
Applying this condition to the expansion of $|\Psi\rangle$ in Slater determinants induced by the natural orbitals,
\begin{equation}\label{Psiansatz}
|\Psi\rangle =  \sum_{\mathbf{i}} c_{\mathbf{i}} \,|\mathbf{i}\rangle \qquad
\end{equation}
it implies \emph{Klyachko's selection rule}, which states that whenever
\begin{equation}
\hat{D}|\mathbf{i}\rangle \neq 0 ,
\end{equation}
the corresponding coefficient $c_{\mathbf{i}}$ vanishes.
To show the strength of this selection rule we study states in the Borland-Dennis setting. The corresponding Hilbert space
$\wedge^3[\mathcal{H}_6]$ has dimension $\binom{6}{3}=20$ and the generalized Pauli constraints read \cite{Borl1972, Rus1, Rus2, Kly3}
\begin{eqnarray}
	&&\lambda_1+\lambda_6, \,\lambda_2+\lambda_5, \,\lambda_3+\lambda_4 \leq 1 \label{d=6c} \qquad\\
	&&D^{(6)} := 2-(\lambda_1 +\lambda_2+\lambda_4) \geq 0 \label{d=6d} .
\end{eqnarray}
The normalization together with the non-negativity of the eigenvalues leads to
\begin{equation}
\lambda_1+\lambda_6 =\lambda_2+\lambda_5 = \lambda_3+\lambda_4 = 1 \label{d=6e} .
\end{equation}
Hence the constraints in (\ref{d=6c}) are always saturated and this implies according to (\ref{pinningeigen})
\begin{eqnarray}\label{d=6f}
\left(\mathrm{Id}-N_1-N_6\right)|\Psi\rangle &=& 0 \nonumber \\
\left(\mathrm{Id}-N_2-N_5\right)|\Psi\rangle &=& 0 \nonumber \\
\left(\mathrm{Id}-N_3-N_4\right)|\Psi\rangle &=& 0 .
\end{eqnarray}
Klyachko's selection rule applied to (\ref{d=6f}) implies that every Slater determinant
showing up in the ansatz (\ref{Psiansatz}) for $|\Psi\rangle$ is built up by natural orbitals with one index from each set $\{1,6\}, \{2,5\}$ and $\{3,4\}$.
Those are the $8$ states $|1,2,3\rangle$, $|1,2,4\rangle$, $|1,3,5\rangle$, $|1,4,5\rangle$, $|2,3,6\rangle$,
$|2,4,6\rangle$, $|3,5,6\rangle$ and $|4,5,6\rangle$. If the constraint (\ref{d=6d}) is also saturated the selection rule
restricts this family of Slater determinants to the three states $|1,2,3\rangle$, $|1,4,5\rangle$ and $|2,4,6\rangle$ and in
that case we find
\begin{equation}\label{HFextstate}
|\Psi_3\rangle = \alpha |1,2,3\rangle + \beta |1,4,5\rangle + \gamma |2,4,6\rangle .
\end{equation}

\subsection{Quasi-Pinning and modified Selection Rule.---}\label{sec:quasiSelectionRule}
In this section we show for the Borland-Dennis setting that any state $|\Psi\rangle \in \wedge^3[\mathcal{H}_6]$ whose natural
occupation numbers are approximately saturating the corresponding generalized Pauli constraint (\ref{d=6d}) also fulfill approximately
condition (\ref{pinningeigen}). We also quantify this relation. This result then guarantees that our Hartree-Fock extension will work
for systems exposing strong pinning.

As a warm-up and since we will need the result we first study a simpler question.
It is a basic fact that the spectrum $\lambda_{\mathrm{Sl}}=(1, \dots, 1, 0, \dots ,0)$ can \emph{only}
arise from a Slater determinant $|\Psi\rangle=|1,\ldots,N\rangle$. Is this statement stable under small
deviations, i.e. $\lambda\approx \lambda_{\mathrm{Sl}} \Rightarrow|\Psi\rangle \approx |1,\ldots,N\rangle$? Yes, it is true according to
\begin{lemma}\label{Slaterstable}
Consider a state $|\Psi\rangle \in \wedge^N[\mathcal{H}_d]$, let $\{|k\rangle\}_{k=1}^d$ be its natural orbitals
and denote the projection operator onto the space spanned by $|1,\ldots,N\rangle$ by $P_{\mathrm{Sl}}$.
Then,
\begin{equation}
1-\delta \leq \|P_{\mathrm{Sl}} \Psi  \|_{L^2}^2 \leq 1- \frac{1}{N} \delta ,
\end{equation}
where
\begin{equation}
0\leq N-(\lambda_1+\ldots+\lambda_N) =:\delta .
\end{equation}
\end{lemma}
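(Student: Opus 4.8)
The plan is to pass to the Slater-determinant basis $\{|\mathbf{i}\rangle\}$ induced by the natural orbitals and to read off $\|P_{\mathrm{Sl}}\Psi\|_{L^2}^2$ as the weight of the single leading configuration. Expanding $|\Psi\rangle = \sum_{\mathbf{i}} c_{\mathbf{i}}|\mathbf{i}\rangle$ over all ordered index tuples $\mathbf{i}=(i_1<\dots<i_N)$, and writing $\mathbf{i}_0=(1,\dots,N)$ for the configuration built from the $N$ largest occupation numbers, I would set $p:=\|P_{\mathrm{Sl}}\Psi\|_{L^2}^2 = |c_{\mathbf{i}_0}|^2$; normalization gives $\sum_{\mathbf{i}}|c_{\mathbf{i}}|^2=1$. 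Because the $|k\rangle$ are the natural orbitals, the $1$-RDO is diagonal in this basis and the coefficient formula (\ref{noncoef}), $\lambda_k=\sum_{\mathbf{i}:\,k\in\mathbf{i}}|c_{\mathbf{i}}|^2$, applies.

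The key step is to evaluate the occupation-number deficit by exchanging the order of the two sums. Counting, for each configuration $\mathbf{i}$, how many of the indices $1,\dots,N$ it contains, I obtain
\[
\sum_{k=1}^N \lambda_k = \sum_{\mathbf{i}} |c_{\mathbf{i}}|^2\, n(\mathbf{i}), \qquad n(\mathbf{i}):=|\,\mathbf{i}\cap\{1,\dots,N\}\,|,
\]
and hence, using $\sum_{\mathbf{i}}|c_{\mathbf{i}}|^2=1$,
\[
\delta = N-\sum_{k=1}^N\lambda_k = \sum_{\mathbf{i}}|c_{\mathbf{i}}|^2\,\bigl(N-n(\mathbf{i})\bigr) = \sum_{\mathbf{i}\neq\mathbf{i}_0}|c_{\mathbf{i}}|^2\,\bigl(N-n(\mathbf{i})\bigr),
\]
the term for $\mathbf{i}_0$ dropping out since $n(\mathbf{i}_0)=N$.

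The combinatorial heart of the argument is the two-sided bound $1\leq N-n(\mathbf{i})\leq N$, valid for every $\mathbf{i}\neq\mathbf{i}_0$. The upper bound is just $n(\mathbf{i})\geq 0$; the lower bound holds because an $N$-element index set different from $\{1,\dots,N\}$ cannot contain all of $1,\dots,N$, so it misses at least one of them, giving $n(\mathbf{i})\leq N-1$. Substituting these bounds and using $\sum_{\mathbf{i}\neq\mathbf{i}_0}|c_{\mathbf{i}}|^2 = 1-p$ turns the displayed identity into $1-p\leq\delta\leq N(1-p)$, which rearranges to exactly $1-\delta\leq p\leq 1-\tfrac{1}{N}\delta$. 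I expect no genuine obstacle; the only points requiring care are verifying that (\ref{noncoef}) may be used in the natural-orbital basis (so the off-diagonal contributions to $\rho_1$ vanish) and the clean isolation of the leading term $\mathbf{i}_0$ from the remaining weight $1-p$.
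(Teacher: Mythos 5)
Your proof is correct and takes essentially the same route as the paper: your double-counting identity $\delta=\sum_{\mathbf{i}}|c_{\mathbf{i}}|^2\bigl(N-n(\mathbf{i})\bigr)$ is precisely the paper's evaluation of $\langle\Psi|\hat S|\Psi\rangle$ for $\hat S = N\,\mathrm{Id}-\sum_{k=1}^{N}a_k^{\dagger}a_k$, whose eigenvalue on $|\mathbf{i}\rangle$ is exactly $N-n(\mathbf{i})$. The subsequent two-sided bound $1\leq N-n(\mathbf{i})\leq N$ for $\mathbf{i}\neq\mathbf{i}_0$ and the rearrangement to $1-\delta\leq \|P_{\mathrm{Sl}}\Psi\|_{L^2}^2\leq 1-\tfrac{1}{N}\delta$ match the paper's estimates step for step.
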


\begin{proof}
We expand the state $|\Psi\rangle$ in Slater determinants induced by natural orbitals (recall Section \ref{sec:notation}),
\begin{equation}\label{expansionstate1}
|\Psi\rangle = \sum_{\textbf{i}}\,c_{\textbf{i}}\,|\textbf{i}\rangle .
\end{equation}
We define the operator
\begin{equation}
\hat{S}  = N\, \mathrm{Id}-\left(a_1^{\dagger} a_1+\ldots + a_N^{\dagger} a_N\right) .
\end{equation}
Since all operators $a_i^{\dagger} a_i, i=1,\ldots, d$ commute it is clear that $\hat{S}$ has the spectrum $\{0,1,\ldots,N\}$ with eigenstates $|\textbf{i}\rangle$.
The eigenvalue corresponding to $|\textbf{i}\rangle$ is the number of indices $k \in \textbf{i}$ not belonging to the set $\{1,\ldots,N\}$. We denote the set of indices leading to the eigenvalue $k$ by $J_k$ and find
\begin{eqnarray}
\delta &\equiv& N- (\lambda_1+\ldots+\lambda_d ) \nonumber \\
&=& \langle \Psi|N \, \mathrm{Id} -\left( a_1^{\dagger}a_1+\ldots+ a_d^{\dagger}a_d\right) \,|\Psi\rangle \nonumber \\
&\equiv& \langle \Psi|\hat{S}|\Psi\rangle \nonumber \\
&=& \sum_{\textbf{i},\textbf{j}\in J_0\cup \ldots \cup J_N } c_{\textbf{j}}^{\ast}\,c_{\textbf{i}}\,\langle \textbf{j} |\hat{S} |\textbf{i}\rangle \nonumber \\
&=& \sum_{\textbf{i}\in J_0\cup \ldots \cup J_N } |c_{\textbf{i}}|^2\,\langle \textbf{i} |\hat{S} |\textbf{i}\rangle .
\end{eqnarray}
Since for $\textbf{i} \in J_k$,
\begin{equation}
\langle \textbf{i} |\hat{S} |\textbf{i}\rangle = k
\end{equation}
we find
\begin{equation}
\delta = \sum_{k=0}^N \sum_{\textbf{i}\in J_k} |c_{\textbf{i}}|^2\, k \geq \sum_{k=1}^N \sum_{\textbf{i}\in J_k} |c_{\textbf{i}}|^2
\end{equation}
and alternatively also
\begin{equation}
\delta \leq N\, \sum_{k=1}^N \sum_{\textbf{i}\in J_k} |c_{\textbf{i}}|^2 .
\end{equation}
The normalization of $|\Psi\rangle$, $\sum_{\textbf{i}} |c_{\textbf{i}}|^2 =1$ yields ($J_0 =\{(1,2,\ldots,N)\}$ contains only one element)
\begin{equation}
\frac{\delta}{N} \leq 1-|c_{(1,\ldots,N)}|^2 \leq \delta .
\end{equation}
and thus
\begin{equation}
1-\delta \leq |c_{(1,\ldots,N)}|^2 \leq 1-\frac{1}{N}\delta .
\end{equation}
\end{proof}

Now, we come back to the original question.
We first state the mathematical result and present the proof afterwards.
\begin{theorem}\label{HFworks}
Given a state $|\Psi\rangle \in \wedge^3[\mathcal{H}_6]$ with natural occupation numbers $(\lambda_k)_{k=1}^6$.
Let $P$ be the projection operator onto the subspace spanned by the states $|1,2,3\rangle, |1,4,5\rangle, |2,4,6\rangle$, which corresponds to exact pinning
of $D^{(6)}=\lambda_5+\lambda_6-\lambda_4 \geq 0$ (recall (\ref{HFextstate})). Then as long as
\begin{equation}
\delta\equiv 3-\lambda_1-\lambda_2-\lambda_3 \leq \frac{1}{4}
\end{equation}
(which means nothing else but being not too far away from the spectrum $\lambda_{\mathrm{Sl}}=(1,1,1,0,0,0)$ of a single Slater determinant) we find
\begin{equation}
1- \chi_{\delta} D^{(6)}\leq \|P \Psi\|_2^2 \leq 1- \frac{1}{2} D^{(6)} ,
\end{equation}
with
\begin{equation}
\chi_{\delta} \equiv \frac{1+2\delta}{1 - 4\delta} .
\end{equation}
\end{theorem}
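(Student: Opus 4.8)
The plan is to work entirely in the Slater basis induced by the natural orbitals. By the reasoning leading to \eqref{d=6f}, the three sum rules $\lambda_1+\lambda_6=\lambda_2+\lambda_5=\lambda_3+\lambda_4=1$ hold automatically, so Klyachko's selection rule confines $|\Psi\rangle$ to the eight determinants $|123\rangle,|124\rangle,|135\rangle,|145\rangle,|236\rangle,|246\rangle,|356\rangle,|456\rangle$. Writing $c_{\mathbf i}$ for the amplitudes and using \eqref{noncoef}, I would first record $D^{(6)}=\langle\Psi|\hat D^{(6)}|\Psi\rangle$ with $\hat D^{(6)}=2\,\mathrm{Id}-N_1-N_2-N_4$, which gives $D^{(6)}=-|c_{124}|^2+|c_{135}|^2+|c_{236}|^2+|c_{456}|^2+2|c_{356}|^2$, while $P$ projects onto $|123\rangle,|145\rangle,|246\rangle$ so that $1-\|P\Psi\|_2^2=|c_{124}|^2+|c_{135}|^2+|c_{236}|^2+|c_{356}|^2+|c_{456}|^2$. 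Subtracting yields the key identity
\[
   1-\|P\Psi\|_2^2 \;=\; D^{(6)} + 2|c_{124}|^2 - |c_{356}|^2 ,
\]
which reduces the whole theorem to controlling the single amplitude $|c_{124}|^2$ of the one determinant on which $\hat D^{(6)}$ equals $-1$.

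The upper bound $\|P\Psi\|_2^2\le 1-\tfrac12 D^{(6)}$ is then immediate from the identity, since $1-\|P\Psi\|_2^2-\tfrac12 D^{(6)}=\tfrac12\big(3|c_{124}|^2+|c_{135}|^2+|c_{236}|^2+|c_{456}|^2\big)\ge 0$ uses only non-negativity of the $|c_{\mathbf i}|^2$; note that $|c_{356}|^2$ drops out entirely.

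For the lower bound I must show that $|c_{124}|^2$ is \emph{quadratically} small in the off-Slater amplitudes. The extra ingredient, beyond the diagonal data \eqref{noncoef}, is that the natural orbitals diagonalise $\rho_1$, so $\langle 3|\rho_1|4\rangle=\langle\Psi|a_4^{\dagger}a_3|\Psi\rangle=0$. The swap $3\leftrightarrow4$ connects exactly the pairs $123\!-\!124$, $135\!-\!145$, $236\!-\!246$, $356\!-\!456$, so evaluating the expectation gives a relation $\sigma_1 c_{124}^{\ast} c_{123}+\sigma_2 c_{145}^{\ast} c_{135}+\sigma_3 c_{246}^{\ast} c_{236}+\sigma_4 c_{456}^{\ast} c_{356}=0$ with unknown signs $\sigma_i\in\{\pm1\}$. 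Applying the triangle inequality and then Cauchy--Schwarz, and setting $A:=|c_{145}|^2+|c_{246}|^2+|c_{456}|^2$ and $B:=|c_{135}|^2+|c_{236}|^2+|c_{356}|^2$, I obtain $|c_{123}|^2\,|c_{124}|^2\le A\,B$, independently of the signs. Now I combine three elementary estimates: Lemma \ref{Slaterstable} with $N=3$ gives $|c_{123}|^2\ge 1-\delta$; expanding $\delta=3-\lambda_1-\lambda_2-\lambda_3$ in amplitudes gives $\delta=2|c_{145}|^2+2|c_{246}|^2+3|c_{456}|^2+(|c_{124}|^2+|c_{135}|^2+|c_{236}|^2+2|c_{356}|^2)\ge 2A$, so $A\le\delta/2$; and $B\le |c_{135}|^2+|c_{236}|^2+2|c_{356}|^2+|c_{456}|^2=D^{(6)}+|c_{124}|^2$. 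Substituting, $(1-\delta)|c_{124}|^2\le\tfrac{\delta}{2}\big(D^{(6)}+|c_{124}|^2\big)$, hence $|c_{124}|^2\le\frac{\delta}{2-3\delta}D^{(6)}$ for $\delta<\tfrac23$. Feeding this into the key identity and discarding $-|c_{356}|^2\le0$ yields $1-\|P\Psi\|_2^2\le\frac{2-\delta}{2-3\delta}D^{(6)}$, and a one-line check reduces $\frac{2-\delta}{2-3\delta}\le\chi_{\delta}=\frac{1+2\delta}{1-4\delta}$ to $\delta(\delta-1)\le0$, valid on $0\le\delta\le\tfrac14$; this gives $\|P\Psi\|_2^2\ge 1-\chi_{\delta}D^{(6)}$.

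The main obstacle is conceptual rather than computational: the diagonal occupation numbers alone only give the first-order bound $|c_{124}|^2\le\delta$, which is far too weak, and one must recognise that the \emph{off-diagonal} vanishing $\langle 3|\rho_1|4\rangle=0$ is precisely what forces $|c_{124}|^2$ to be second order. Identifying the correct off-diagonal constraint, getting its combinatorial support right, and verifying that it survives the sign-blind triangle inequality is the one genuinely delicate step; everything after it is bookkeeping with the hypothesis $\delta\le\tfrac14$.
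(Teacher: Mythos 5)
Your proposal is correct, and while it rests on the same three pillars as the paper's proof---the eight-determinant ansatz following from the always-saturated constraints (\ref{d=6c}) and Klyachko's selection rule, the vanishing off-diagonal matrix element $\langle 3|\rho_1|4\rangle=0$ as the one genuinely non-diagonal input, and Lemma \ref{Slaterstable} to lower-bound $|c_{123}|^2$---the analytic core is organized differently and is in fact sharper. The paper estimates $|c_{124}|^2$ (its $|\beta|^2$) via the inequality $(A+B+C)^2\leq 3(A^2+B^2+C^2)$ combined with the crude bounds $|\nu|^2,|\mu|^2,|\zeta|^2\leq 1-|\alpha|^2$, and then reassembles $\chi_\delta D^{(6)}$ by tuning two auxiliary parameters $r,s$, whose non-negativity is exactly where the hypothesis $|\alpha|^2\geq \tfrac34$ (i.e.\ $\delta\leq\tfrac14$) enters. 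You instead extract the identity $1-\|P\Psi\|_2^2=D^{(6)}+2|c_{124}|^2-|c_{356}|^2$, which makes the upper bound a one-line positivity check (I verified: the slack is $\tfrac12\bigl(3|c_{124}|^2+|c_{135}|^2+|c_{236}|^2+|c_{456}|^2\bigr)$, with $|c_{356}|^2$ cancelling) and reduces the lower bound to controlling $|c_{124}|^2$ alone. Your Cauchy--Schwarz step $|c_{123}|^2|c_{124}|^2\leq AB$, combined with the tailored bounds $A\leq\delta/2$ (from the amplitude expansion of $\delta$, which I checked: $\delta=|c_{124}|^2+|c_{135}|^2+|c_{236}|^2+2|c_{145}|^2+2|c_{246}|^2+2|c_{356}|^2+3|c_{456}|^2$) and $B\leq D^{(6)}+|c_{124}|^2$, gives $1-\|P\Psi\|_2^2\leq\frac{2-\delta}{2-3\delta}D^{(6)}$, valid for all $\delta<\tfrac23$; the comparison $\frac{2-\delta}{2-3\delta}\leq\chi_\delta$ indeed reduces to $10\delta(\delta-1)\leq 0$, so the theorem's hypothesis $\delta\leq\tfrac14$ is needed only in this final cosmetic step. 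What your route buys is a strictly smaller constant (e.g.\ at $\delta=0.2$ your factor is $\approx 1.29$ versus $\chi_{0.2}=7$) on a larger range of $\delta$; what the paper's route buys is that the parameters $r,s$ directly produce the stated $\chi_\delta$ without a final comparison. Your explicit handling of the sign ambiguities in $\langle\Psi|a_4^\dagger a_3|\Psi\rangle=0$, which the paper silently absorbs into the coefficients, is also a correct and welcome refinement; since the triangle inequality discards the phases anyway, nothing is lost.
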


\begin{proof}
In Section \ref{sec:selectionrule} we concluded that $|\Psi\rangle$ has the form
\begin{eqnarray}\label{ansatz}
|\Psi\rangle &=& \alpha |1,2,3\rangle+ \beta |1,2,4\rangle+ \gamma |1,3,5\rangle \nonumber \\
 &&+ \,\delta |2,3,6\rangle +\nu |1,4,5\rangle +\mu |2,4,6\rangle  \nonumber \\
 &&+ \,\xi |3,5,6\rangle+\zeta |4,5,6\rangle \,,
\end{eqnarray}
with natural orbitals $\{|k\rangle\}_{k=1}^6$.
Since the corresponding $1-$RDO is diagonal w.r.t. $\{|k\rangle\}_{k=1}^6$,
\begin{equation}\label{diagonal}
\langle k |\rho_1|l\rangle = \delta_{k l} \,\lambda_k ,
\end{equation}
we find (recall (\ref{noncoef}))
\begin{eqnarray}
\lambda_4 &=& |\beta|^2+|\nu|^2+|\mu|^2+|\zeta|^2 \\
\lambda_5 &=& |\gamma|^2+|\nu|^2+|\xi|^2+|\zeta|^2 \\
\lambda_6 &=& |\delta|^2+|\mu|^2+|\xi|^2+|\zeta|^2
\end{eqnarray}
The goal is now to show that the coefficients $\beta, \gamma, \delta,\xi$ and $\zeta$ are small, i.e.
\begin{eqnarray}
\|P \Psi\|_{L^2}^2 &=& |\alpha|^2+ |\mu|^2+|\nu|^2 \nonumber \\
&=& 1-\left( |\beta|^2+ |\gamma|^2+|\delta|^2+|\xi|^2+ |\zeta|^2\right)
\end{eqnarray}
is close to $1$, whenever (\ref{d=6d}), which here reads
\begin{equation}\label{distancegreek}
D^{(6)} = -|\beta|^2+|\gamma|^2+|\delta|^2+ 2|\xi|^2+|\zeta|^2 , \\
\end{equation}
is approximately saturated.
First we observe
\begin{eqnarray}\label{overlapgreek}
\|P \Psi\|_{L^2}^2 &\leq& 1- \frac{1}{2}\left( |\beta|^2+ |\gamma|^2+|\delta|^2+ 2|\xi|^2+ |\zeta|^2\right)\nonumber\\
&\leq& 1- \frac{1}{2}\left( - |\beta|^2+ |\gamma|^2+|\delta|^2+ 2|\xi|^2+ |\zeta|^2\right)\nonumber\\
&=& 1-\frac{1}{2}\,D^{(6)} ,
\end{eqnarray}
which is the upper bound for $\|P \Psi\|_{L^2}^2$ in Theorem \ref{HFworks}.

To derive the lower bound note the essential difference in (\ref{distancegreek}) and (\ref{overlapgreek}), the sign of the term $|\beta|^2$.
To get rid of this we write $|\beta|^2 = - \chi\, |\beta|^2 + (1+ \chi) |\beta|^2$, $\chi>0$ and estimate $(1+\chi)|\beta|^2$ in terms of $|\gamma|^2, |\delta|^2, |\xi|^2, |\zeta|^2$.
For this observe that (\ref{diagonal}) in particular implies
\begin{eqnarray}
0 = \langle 4|\rho_1|3\rangle = \overline{\alpha} \beta +  \overline{\gamma} \nu +  \overline{\delta} \mu +  \overline{\xi} \zeta\,,
\end{eqnarray}
which leads by the triangle inequality, the identity $(A+B+C)^2 \leq 3 \,(A^2+B^2 + C^2)$ and $|\mu|^2, |\nu|^2,|\xi|^2,|\zeta|^2 \leq 1-|\alpha|^2$ to
\begin{eqnarray}\label{betaestimate}
|\beta|^2 &=& \left|\frac{1}{\overline{\alpha}}\,(\overline{\gamma} \nu +  \overline{\delta} \mu +  \overline{\xi} \zeta)\right|^2 \nonumber \\
&\leq& \frac{1}{|\alpha|^2}\,\left(|\gamma| \,|\nu| + |\delta| \,|\mu| +|\xi| \,|\zeta|   \right)^2                  \nonumber \\
&\leq& \frac{3}{|\alpha|^2}\,\left(|\gamma|^2 \,|\nu|^2 + |\delta|^2 \,|\mu|^2 +|\xi|^2 \,|\zeta|^2   \right)\nonumber \\
 &\leq& \frac{3(1-|\alpha|^2)}{|\alpha|^2}\,\left(|\gamma|^2  + |\delta|^2  + \frac{1}{3} (2|\xi|^2 +|\zeta|^2) \right)\,.
\end{eqnarray}
Now, for all $s,r \geq0$ we find by using (\ref{betaestimate})
\begin{eqnarray}\label{lowerboundest}
\lefteqn{|\beta|^2+ |\gamma|^2+|\delta|^2+|\xi|^2+ |\zeta|^2}&&\nonumber \\
&\leq& (1-r)|\beta|^2+ |\gamma|^2+|\delta|^2+ (1+s)(2 |\xi|^2+ |\zeta|^2) + r |\beta|^2 \nonumber \\
&\leq& (1-r)|\beta|^2+ |\gamma|^2+|\delta|^2+(1+s)(2 |\xi|^2+ |\zeta|^2) \nonumber \\
&& + \frac{3 r (1-|\alpha|^2)}{|\alpha|^2}\,\left(|\gamma|^2  + |\delta|^2  + \frac{1}{3} (2|\xi|^2 +|\zeta|^2) \right) \nonumber \\
&=& (1-r)|\beta|^2 + \left(1+ \frac{3r(1-|\alpha|^2)}{|\alpha|^2}\right)\,\left(|\gamma|^2 + |\delta|^2\right) \nonumber \\
&& +  \left( 1+ s+\frac{r(1-|\alpha|^2)}{|\alpha|^2}\right) \left(2 |\xi|^2+ |\zeta|^2\right) .
\end{eqnarray}
By choosing
\begin{eqnarray}
r&=& \frac{2 |\alpha|^2}{4 |\alpha|^2-3}\\
s&=& \frac{4 (1-|\alpha|^2)}{4|\alpha|^2-3}
\end{eqnarray}
the last expression in (\ref{lowerboundest}) coincides with $D^{(6)}$ up to a global factor $\chi$.
Both parameters $r,s$ are non-negative as long as $|\alpha|^2\geq \frac{3}{4}$.
Finally, this leads to
\begin{eqnarray}
\|P \Psi\|_{L^2}^2 &=& 1-(|\beta|^2+ |\gamma|^2+|\delta|^2+|\xi|^2+ |\zeta|^2) \nonumber \\
&\geq& 1-(r-1)D^{(6)} \nonumber \\
&\equiv& 1-\chi_{1-|\alpha|^2} D^{(6)} ,
\end{eqnarray}
with
\begin{eqnarray}
\chi_{1-|\alpha|^2}\equiv r-1 &=& \frac{3-2 |\alpha|^2}{4 |\alpha|^2-3} \nonumber \\
&=&  \frac{1 + 2(1- |\alpha|^2)}{1-4(1-|\alpha|^2)} .
\end{eqnarray}
Lemma \ref{Slaterstable} states $|\alpha|^2 \geq 1-\delta$ and since $\chi$ is monotonously increasing, $\chi_{1-|\alpha|^2}\leq \chi_{\delta}$,
which finishes the proof.
\end{proof}

\bibliography{refqmpArXiv,comments}

\begin{thebibliography}{20}
\expandafter\ifx\csname natexlab\endcsname\relax\def\natexlab#1{#1}\fi
\expandafter\ifx\csname bibnamefont\endcsname\relax
  \def\bibnamefont#1{#1}\fi
\expandafter\ifx\csname bibfnamefont\endcsname\relax
  \def\bibfnamefont#1{#1}\fi
\expandafter\ifx\csname citenamefont\endcsname\relax
  \def\citenamefont#1{#1}\fi
\expandafter\ifx\csname url\endcsname\relax
  \def\url#1{\texttt{#1}}\fi
\expandafter\ifx\csname urlprefix\endcsname\relax\def\urlprefix{URL }\fi
\providecommand{\bibinfo}[2]{#2}
\providecommand{\eprint}[2][]{\url{#2}}

\bibitem[{\citenamefont{Pauli}(1925)}]{Pauli1925}
\bibinfo{author}{\bibfnamefont{W.}~\bibnamefont{Pauli}}, \bibinfo{journal}{Z.
  Phys. A Hadrons and Nuclei} \textbf{\bibinfo{volume}{31}},
  \bibinfo{pages}{765} (\bibinfo{year}{1925}).

\bibitem[{\citenamefont{Dirac}(1926)}]{Dirac1926}
\bibinfo{author}{\bibfnamefont{P.~A.~M.} \bibnamefont{Dirac}},
  \bibinfo{journal}{Proc R. Soc. Lond. A} \textbf{\bibinfo{volume}{112}},
  \bibinfo{pages}{661} (\bibinfo{year}{1926}).

\bibitem[{\citenamefont{Heisenberg}(1926)}]{Heis1926}
\bibinfo{author}{\bibfnamefont{W.}~\bibnamefont{Heisenberg}},
  \bibinfo{journal}{Z. Phys. A Hadrons and Nuclei}
  \textbf{\bibinfo{volume}{38}}, \bibinfo{pages}{411} (\bibinfo{year}{1926}).

\bibitem[{\citenamefont{Altunbulak and Klyachko}(2008)}]{Kly3}
\bibinfo{author}{\bibfnamefont{M.}~\bibnamefont{Altunbulak}} \bibnamefont{and}
  \bibinfo{author}{\bibfnamefont{A.}~\bibnamefont{Klyachko}},
  \bibinfo{journal}{Commun. Math. Phys.} \textbf{\bibinfo{volume}{282}},
  \bibinfo{pages}{287} (\bibinfo{year}{2008}).

\bibitem[{\citenamefont{Ruskai}(1972)}]{Rus1}
\bibinfo{author}{\bibfnamefont{M.~B.} \bibnamefont{Ruskai}},
  \bibinfo{journal}{Phys. Rev. A} \textbf{\bibinfo{volume}{5}},
  \bibinfo{pages}{1336} (\bibinfo{year}{1972}).

\bibitem[{\citenamefont{Ruskai}(2007)}]{Rus2}
\bibinfo{author}{\bibfnamefont{M.~B.} \bibnamefont{Ruskai}},
  \bibinfo{journal}{Phys. Rev. A} \textbf{\bibinfo{volume}{40}}
  (\bibinfo{year}{2007}).

\bibitem[{\citenamefont{Borland and Dennis}(1972)}]{Borl1972}
\bibinfo{author}{\bibfnamefont{R.~E.} \bibnamefont{Borland}} \bibnamefont{and}
  \bibinfo{author}{\bibfnamefont{K.}~\bibnamefont{Dennis}},
  \bibinfo{journal}{J. Phys. B} \textbf{\bibinfo{volume}{5}},
  \bibinfo{pages}{7} (\bibinfo{year}{1972}).

\bibitem[{\citenamefont{Klyachko}(2006)}]{Kly2}
\bibinfo{author}{\bibfnamefont{A.}~\bibnamefont{Klyachko}},
  \bibinfo{journal}{J. Phys.: Conf. Ser.} \textbf{\bibinfo{volume}{36}},
  \bibinfo{pages}{72} (\bibinfo{year}{2006}).

\bibitem[{\citenamefont{Klyachko}(2004)}]{Kly4}
\bibinfo{author}{\bibfnamefont{A.}~\bibnamefont{Klyachko}},
  \bibinfo{journal}{ArXiv:0409113}  (\bibinfo{year}{2004}).

\bibitem[{\citenamefont{Christandl and Mitchison}(2006)}]{MC}
\bibinfo{author}{\bibfnamefont{M.}~\bibnamefont{Christandl}} \bibnamefont{and}
  \bibinfo{author}{\bibfnamefont{G.}~\bibnamefont{Mitchison}},
  \bibinfo{journal}{Commun. Math. Phys.} \textbf{\bibinfo{volume}{261}},
  \bibinfo{pages}{789} (\bibinfo{year}{2006}).

\bibitem[{\citenamefont{Daftuar and Hayden}(2005)}]{Daftuar}
\bibinfo{author}{\bibfnamefont{S.}~\bibnamefont{Daftuar}} \bibnamefont{and}
  \bibinfo{author}{\bibfnamefont{P.}~\bibnamefont{Hayden}},
  \bibinfo{journal}{Ann. Phys.} \textbf{\bibinfo{volume}{315}},
  \bibinfo{pages}{80 } (\bibinfo{year}{2005}).

\bibitem[{\citenamefont{Christandl et~al.}(2012)\citenamefont{Christandl,
  Doran, Kousidis, and Walter}}]{MC2}
\bibinfo{author}{\bibfnamefont{M.}~\bibnamefont{Christandl}},
  \bibinfo{author}{\bibfnamefont{B.}~\bibnamefont{Doran}},
  \bibinfo{author}{\bibfnamefont{S.}~\bibnamefont{Kousidis}}, \bibnamefont{and}
  \bibinfo{author}{\bibfnamefont{M.}~\bibnamefont{Walter}},
  \bibinfo{journal}{ArXiv:1204.0741}  (\bibinfo{year}{2012}).

\bibitem[{\citenamefont{Coleman and Yukalov}(2000)}]{Col}
\bibinfo{author}{\bibfnamefont{A.~J.} \bibnamefont{Coleman}} \bibnamefont{and}
  \bibinfo{author}{\bibfnamefont{V.~I.} \bibnamefont{Yukalov}},
  \emph{\bibinfo{title}{Reduced Density Matrices: Coulson’s Challenge}}
  (\bibinfo{publisher}{Springer, New York}, \bibinfo{year}{2000}).

\bibitem[{\citenamefont{Coleman}(1963)}]{Col2}
\bibinfo{author}{\bibfnamefont{A.~J.} \bibnamefont{Coleman}},
  \bibinfo{journal}{Rev. Mod. Phys.} \textbf{\bibinfo{volume}{35}},
  \bibinfo{pages}{668} (\bibinfo{year}{1963}).

\bibitem[{\citenamefont{Klyachko}(2009)}]{Kly1}
\bibinfo{author}{\bibfnamefont{A.}~\bibnamefont{Klyachko}},
  \bibinfo{journal}{ArXiv:0904.2009}  (\bibinfo{year}{2009}).

\bibitem[{\citenamefont{Wang et~al.}(2012)\citenamefont{Wang, Wang, Yang, and
  Li}}]{harmOsc2012}
\bibinfo{author}{\bibfnamefont{Z.}~\bibnamefont{Wang}},
  \bibinfo{author}{\bibfnamefont{A.}~\bibnamefont{Wang}},
  \bibinfo{author}{\bibfnamefont{Y.}~\bibnamefont{Yang}}, \bibnamefont{and}
  \bibinfo{author}{\bibfnamefont{X.}~\bibnamefont{Li}},
  \bibinfo{journal}{ArXiv:1108.1607}  (\bibinfo{year}{2012}).

\bibitem[{\citenamefont{Laughlin}(1983)}]{Laughlin1983}
\bibinfo{author}{\bibfnamefont{R.~B.} \bibnamefont{Laughlin}},
  \bibinfo{journal}{Phys. Rev. Lett.} \textbf{\bibinfo{volume}{50}},
  \bibinfo{pages}{1395} (\bibinfo{year}{1983}).

\bibitem[{\citenamefont{Bach}(1992)}]{Bach1992}
\bibinfo{author}{\bibfnamefont{V.}~\bibnamefont{Bach}},
  \bibinfo{journal}{Commun. Math. Phys.} \textbf{\bibinfo{volume}{147}},
  \bibinfo{pages}{527} (\bibinfo{year}{1992}).

\bibitem[{\citenamefont{Eisert and Briegel}(2001)}]{Eisert2001}
\bibinfo{author}{\bibfnamefont{J.}~\bibnamefont{Eisert}} \bibnamefont{and}
  \bibinfo{author}{\bibfnamefont{H.}~\bibnamefont{Briegel}},
  \bibinfo{journal}{Phys. Rev. A} \textbf{\bibinfo{volume}{64}},
  \bibinfo{pages}{022306} (\bibinfo{year}{2001}).

\bibitem[{\citenamefont{L\'evay and Vrana}(2008)}]{Vrana2008}
\bibinfo{author}{\bibfnamefont{P.}~\bibnamefont{L\'evay}} \bibnamefont{and}
  \bibinfo{author}{\bibfnamefont{P.}~\bibnamefont{Vrana}},
  \bibinfo{journal}{Phys. Rev. A} \textbf{\bibinfo{volume}{78}},
  \bibinfo{pages}{022329} (\bibinfo{year}{2008}).

\end{thebibliography}

\end{document}